\documentclass[conference,compsoc]{IEEEtran}
\IEEEoverridecommandlockouts

\usepackage{cite}
\usepackage{amsmath,amssymb,amsfonts,amsthm}
\usepackage{algorithm}
\usepackage[noend]{algpseudocode}
\usepackage{graphicx}
\usepackage{xcolor}
\usepackage{booktabs}
\usepackage{url}
\usepackage{tikz}
\usetikzlibrary{arrows.meta,positioning,fit,backgrounds,calc,
  shapes.geometric,shapes.multipart,decorations.pathreplacing}
\usepackage[caption=false,font=footnotesize]{subfig}
\usepackage{balance}

\usepackage{float}

\definecolor{gibNavy}{RGB}{27,38,59}
\definecolor{gibBlue}{RGB}{52,101,164}
\definecolor{gibTeal}{RGB}{23,147,140}
\definecolor{gibRed}{RGB}{192,57,43}
\definecolor{gibAmber}{RGB}{214,137,16}
\definecolor{gibGreen}{RGB}{39,152,96}
\definecolor{gibGray}{RGB}{236,238,241}

\newcommand{\sys}{\textsc{Garlic-in-a-Box}}
\newcommand{\code}[1]{\texttt{\small #1}}
\newcommand{\Fv}{\mathcal{F}_v}
\newtheorem{definition}{Definition}
\newtheorem{theorem}{Theorem}

\begin{document}
\title{Garlic in a Box: I2P Network Emulator}

\author{
\IEEEauthorblockN{\small\bfseries
\mbox{Siddique Abubakr Muntaka\IEEEauthorrefmark{1}},
\mbox{Faiza Hussein\IEEEauthorrefmark{2}},
\mbox{Boutros Farah\IEEEauthorrefmark{3}},
\mbox{Anthony Kashi\IEEEauthorrefmark{3}},
\mbox{Jacques Bou Abdo\IEEEauthorrefmark{1}},
\mbox{Jacques Demerjian\IEEEauthorrefmark{3}\IEEEauthorrefmark{4}}
}
\IEEEauthorblockA{
\mbox{\IEEEauthorrefmark{1}School of IT, University of Cincinnati, OH, USA}\quad
\mbox{\IEEEauthorrefmark{2}Dept. of Computer Science, Garden City University College, Ghana}\\
\mbox{\IEEEauthorrefmark{3}LaRRIS, Faculty of Sciences, Lebanese University, Lebanon}\quad
\mbox{\IEEEauthorrefmark{4}Faculty of Arts and Sciences, USEK, Lebanon}\\
\{muntaksr@mail.uc.edu, bouabdjs@ucmail.uc.edu\}
}
}


\maketitle

\begin{abstract}
The Invisible Internet Project (I2P) is a peer-to-peer anonymity network
whose routing and peer-selection mechanics underlie a growing family of
state-sponsored covert relay architectures. Studying these mechanics
demands a controlled, reproducible environment. We present
\sys{}, an installable emulator that deploys a fully isolated network of
unmodified I2P Java routers on a single host or cluster of hosts. Each router runs in its own
Linux network namespace. A filesystem bootstrap removes the dependency on
signed reseed servers, and synthetic geographic diversity activates
the same location-aware peer-selection code paths present on the live
network. A fifty-router deployment across five geographic locations stabilises
in about twenty-five minutes and uses roughly 21\,GB of memory.
To demonstrate the platform's value, we model I2P peer selection as a
two-stage process and conduct a reputation attack. Across five independent
trials, the attack promotes the adversary into the victim's eligible selection set with perfect consistency, and the adversary is chosen for
$1.38\times$ more tunnels than matched controls. The platform and its
measurement pipeline are released to support reproducible I2P security
research and hands-on cyber-defence training.
\end{abstract}

\begin{IEEEkeywords}
I2P, anonymous overlay, network emulator, peer selection,
reputation attack, cybersecurity training, national security.
\end{IEEEkeywords}

\section{Introduction}
\label{sec:intro}
The Invisible Internet Project (I2P) is a peer-to-peer anonymity
network in which no router is merely a client \cite{muntaka2026systemic}. Each one relays for
others, and each one chooses the peers that make up its own tunnels
from a reputation database it maintains locally. Those choices decide
who can observe a flow, who can correlate it, and how much anonymity
the network ultimately provides. They are also the point where the
network is most open to manipulation, since an adversary able to
influence those choices can raise its own standing in a victim's
selection process.

This has become more than a theoretical concern. Operational Relay Box
(ORB) networks operated by actors such as APT28, Volt Typhoon, and
SPACEHOP have begun to reuse the peer-selection and relay-management
techniques of anonymity systems, I2P prominent among them, spreading across compromised infrastructure on a scale no volunteer network
approaches~\cite{mandiant_orb2024,mitre_apt28,mitre_volttyphoon,mitre_spacehop}. A Cybersecurity and Infrastructure Security Agency (CISA) advisory in April 2026 flagged these
covert ecosystems as a strategic national security threat \cite{cisa2026}. Whether the
goal is to disrupt such traffic or to train the operators who must
respond to it, the first requirement is the same: a clear understanding
of how these networks decide where traffic goes.

That understanding is difficult to obtain. The live I2P network cannot
be experimented on without raising ethical and reproducibility
concerns, and a handful of manually assembled routers cannot capture
the network-level behavior that a realistic deployment produces.
Neither option gives an educator a safe place to teach these attacks in
the first place.

We present \sys{}, an installable emulator that closes this gap.
It runs unmodified I2P Java routers, each isolated in a Linux network namespace, and provides a control interface, a scenario engine, and a measurement pipeline. The platform answers two needs at once. It gives researchers a controlled instrument for observing router behavior and collecting protocol data at scale, from peer selection and tunnel construction to any interaction the unmodified router exposes. It also gives defenders a safe range on which to
rehearse against the protocol mechanics that covert relay networks now
exploit.

\smallskip\noindent In support of these goals, this paper contributes the following.
\begin{enumerate}
  \item A packaged, reproducible I2P emulator
  (Section~\ref{sec:design}) that runs unmodified routers in isolated
  namespaces and bootstraps without a signed reseed server.
  
  \item A two-stage probabilistic model of I2P peer selection (Section~\ref{sec:model}), distinguishing the selection stage, where a peer becomes eligible, from the delivery stage, where it must actually relay traffic.

  \item A profiles-as-measurement method
  (Section~\ref{sec:method}) that reads the victim's own on-disk peer state to quantify attack effects on an unmodified router.
  \item A five-trial empirical study (Section~\ref{sec:eval})
  of a reputation attack, with a deterministic result on the
  selection stage and a precise account of the delivery stage.
  \item A structured training capability (Section~\ref{sec:training}) that supports red-team and blue-team exercises on anonymous overlay threats, including scenarios relevant to ORB-style covert relay operations.
\end{enumerate}

\section{Background and Related Work}
\label{sec:bg}

I2P is a message-oriented anonymous overlay \cite{muntaka2025optimizing}. Routers publish
\emph{RouterInfo} records to a distributed hash table called the
\emph{netDb}, which is replicated by a subset of nodes designated as
\emph{floodfill} peers~\cite{i2p-techdoc}. Traffic travels through
paired unidirectional \emph{tunnels}, rebuilt approximately every ten
minutes \cite{abdo2023modeling}. A router selects the peers in its own tunnels from a local
reputation database \cite{muntaka2026bridging}. The security properties of the network follow
directly from those selection decisions.

Each I2P router keeps a \emph{profile} for every peer it has interacted
with, accumulating observed bandwidth, tunnel-build acceptance and
rejection rates, and response latency~\cite{i2p-peersel}. The
\emph{ProfileOrganizer} sorts these profiles into tiers on a coalescing
timer, admitting the best-observed peers into a \emph{fast} tier and a
\emph{high-capacity} tier from which tunnel hops are drawn~\cite{schimmer09}.
Alongside these measured quantities, every profile keeps two explicit bonus
fields, \code{speedBonus} and \code{capacityBonus}. Their effect is to nudge a
peer's standing without reference to anything it has actually done, which is
precisely what makes them attractive to an attacker: manipulating profile state
amounts to writing directly to these fields rather than earning a reputation
through behaviour. Whether such a nudge is enough to matter was settled by
Herrmann and Grothoff, whose real-world study of I2P found that performance-based
selection already introduces measurable bias even when peers behave
benignly~\cite{herrmann11}. If honest variation in observed performance is enough
to skew selection, then a small change to a peer's stored state can just as
easily push it from one tier into another.

Fig.~\ref{fig:i2p-arch} follows this process from start to finish: a new router
first gathers peers through reseed, maintains and updates their profiles in a
local database, and then, when assembling tunnels, draws its hops preferentially
from the fast and high-capacity groups that the ProfileOrganizer curates \cite{muntaka2026geographic}.

\begin{figure}[H]
\centering
\includegraphics[width=0.90\linewidth]{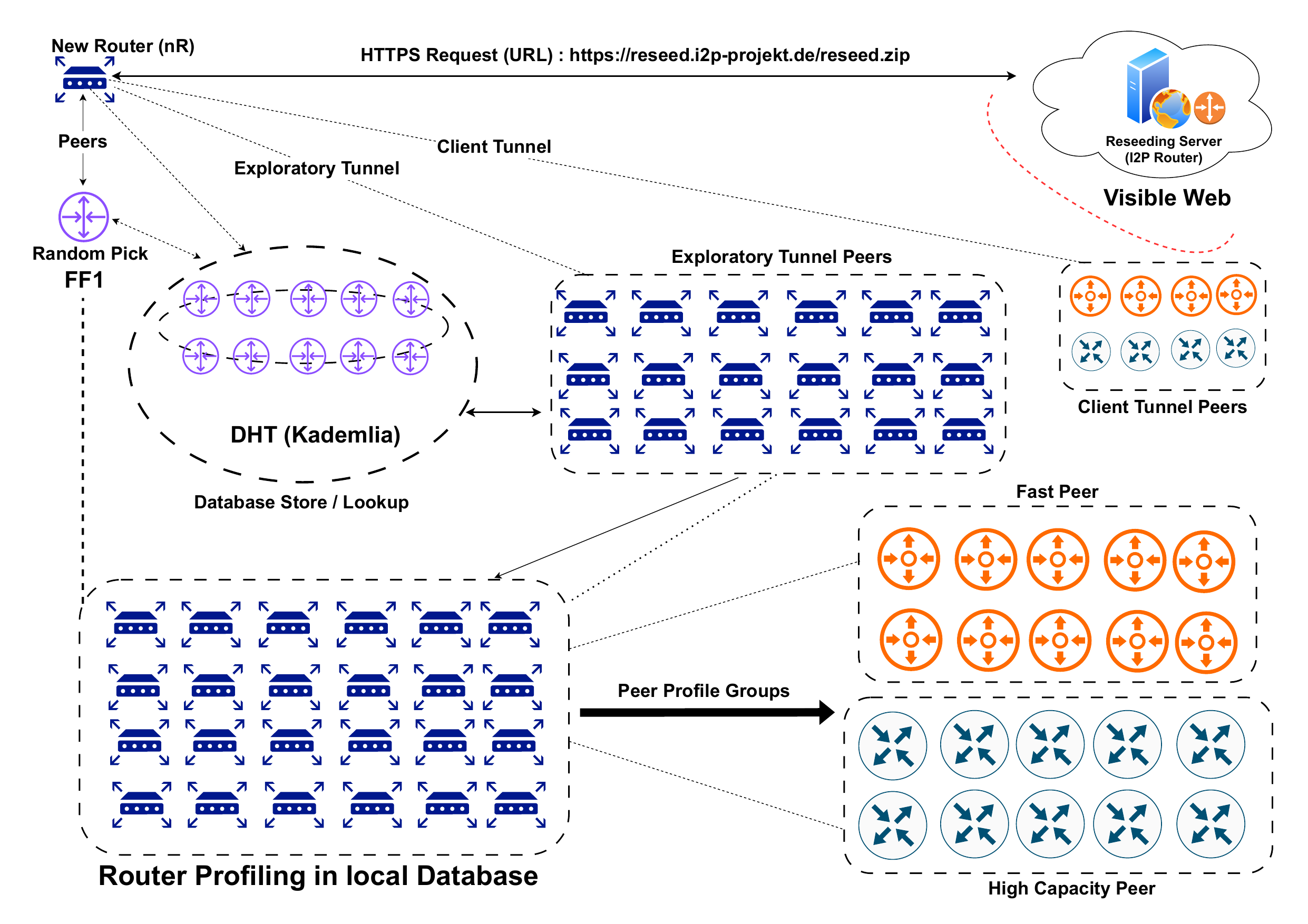}
\caption{I2P peer-selection ecosystem \cite{muntaka2026geographic}. A new router bootstraps via the public reseed mechanism, accumulates peer profiles
  in its local DHT-backed database, and selects tunnel hops
  from the fast and high-capacity peer groups produced by
  the ProfileOrganizer. \sys{} replicates this full cycle
  in an isolated environment without the external reseed
  dependency.}
\label{fig:i2p-arch}
\end{figure}

Prior I2P studies have examined netDb crawling~\cite{timpanaro11},
practical attacks including Sybil and intersection
variants~\cite{egger13}, peer-profile bias~\cite{schimmer09,herrmann11},
resilience properties~\cite{muntaka2025itl}, and the construction of
validated data-collection frameworks~\cite{muntaka2025dib}. Each of
these required bespoke testbed setups that are not reusable.
Jansen and Hopper's Shadow~\cite{shadow} provides excellent simulation
for Tor but has no I2P support. Ahmed et al. ~\cite{ahmed2025open} package a 5G core as a reproducible cyber range, establishing that the reproducibility ethic is achievable for complex network systems. No equivalent exists for I2P on the unmodified reference router in a fully isolated, ethics-safe environment.

\section{The \sys{} Emulator}
\label{sec:design}

\sys{} runs a complete I2P network inside a single machine. It stands up
dozens of real routers, wires them into a private overlay that mirrors the
live network's structure, and gives the operator full observational control
over every router, tunnel, and peer.

Three properties drove its design. \textbf{Fidelity}: it runs the unmodified
Java I2P reference implementation, not a model of it, so peer selection,
tunnel construction, NetDB behaviour, and transport handshakes are the
protocol's own. \textbf{Isolation}: the network is closed by construction,
which serves ethics and repeatability at once, since it can neither leak into
the live network nor be perturbed by it, and an experiment reproduces months
later. \textbf{Deployability}: it installs from one Debian package with source
available \cite{siddiquei2p-emulator}, launches through a graphical interface,
and rebuilds reproducibly from a single topology file, without the operator
needing to know namespaces or I2P internals.

\subsection{Architecture and Per-Router Isolation}
The reference deployment runs on the University of Cincinnati Ohio Cyber
Range Institute (OCRI), inside a nested virtualization stack
(Fig.~\ref{fig:arch}) whose innermost Ubuntu~24.04 guest hosts the emulator
with 48\,GB RAM, twelve virtual CPUs, 200\,GB disk, and a bridged adapter. The
stack imposes no functional restriction, since namespaces need only kernel
network-stack access; it fixes only an upper memory bound.

\begin{figure}[H]
\centering
\begin{tikzpicture}[font=\scriptsize,
  box/.style={draw,rounded corners=3pt,minimum width=6.5cm,
              minimum height=0.55cm,align=center},
  ns/.style={draw=gibNavy,rounded corners=2pt,fill=gibRed!20,
             minimum width=0.84cm,minimum height=0.44cm,
             align=center,font=\tiny}]
\node[box,fill=gibNavy!88,text=white] (e)
      {Physical server $\rightarrow$ VMware ESXi (Type-1)};
\node[box,fill=gibBlue!65,text=white,below=1.4mm of e] (w)
      {Windows Server 2019 VM};
\node[box,fill=gibBlue!30,below=1.4mm of w] (v)
      {Oracle VirtualBox (Type-2)};
\node[box,fill=gibGray,below=1.4mm of v,minimum height=1.72cm] (u){};
\node[anchor=north west,xshift=4pt,yshift=-3pt,font=\scriptsize]
      at (u.north west)
      {\textbf{Ubuntu 24.04 guest}\quad 48\,GB RAM, 12\,vCPU, 200\,GB};
\node[ns] (n1) at ($(u.center)+(-2.65,-0.17)$) {ns\\$r_1$};
\node[ns,right=1.1mm of n1] (n2) {ns\\$r_2$};
\node[ns,right=1.1mm of n2] (n3) {ns\\$r_3$};
\node[right=1.2mm of n3,font=\tiny] (el) {$\cdots$};
\node[ns,right=1.2mm of el] (n50){ns\\$r_{50}$};
\node[font=\tiny,below=0.7mm of n2,xshift=8mm,text=gibNavy]
      {unmodified Java I2P $\times 50$};
\begin{scope}[on background layer]
\node[draw=gibRed,dashed,rounded corners,fill=gibRed!5,
      fit=(n1)(n50),inner sep=3pt]{};
\end{scope}
\end{tikzpicture}
\caption{Deployment stack. Fifty unmodified Java I2P routers run as
isolated network-namespace processes inside a single Ubuntu guest.}
\label{fig:arch}
\end{figure}

Each router $r_i$ occupies its own Linux network namespace, \code{i2pns-r$i$}, joined to an emulated subnet by a \code{veth} pair and a bridge, letting many
real routers coexist on one host without interference. Each carries its own routing table, a stable IP from the topology, and its own NTCP2 and SSU2 ports.
Inter-router packets traverse the host loopback at memory-copy speed and never reach a physical interface. Per-router state lives under a predictable tree,
\code{\textasciitilde/i2p-testnet-50/r$i$/\{config,data,logs\}}, read directly by the measurement pipeline. Isolation in I2P matters not at the binary layer
but at the identity, NetDB, tunnel, and peer-state layer; separating all of it per router lets one installation behave as fifty independent participants.

\subsection{Scale and Geographic Realism}
The reference network runs fifty routers across five locations, France,
Germany, Japan, Lebanon, and the United States, with five floodfills forming the NetDB backbone. The geography is functional: I2P
weighs country of origin when building tunnels and ranks routers into bandwidth
tiers, so \sys{} assigns each a public-style IP from a distinct address family
and a country-appropriate bandwidth profile, engaging that same selection
logic. Fifty routers is enough for it to bite, giving floodfill selection real
diversity, making tunnel paths non-trivial, and letting churn drop several
routers at once. 
Scale is bounded by the memory available to the guest, not by the platform, so a larger allocation raises the count; Section~\ref{sec:eval} reports behaviour as that bound is approached.

\subsection{Bootstrapping Without Signed Reseed}
Standard I2P bootstraps from cryptographically signed \code{.su3} bundles served
by registered reseed hosts (the reseed path in
Fig.~\ref{fig:i2p-arch})~\cite{i2p-techdoc, muntaka2026geographic}. A private
network cannot produce those bundles, and a router below its known-peers
threshold loops on the public reseed servers, the very hosts an isolated network
must never touch. \sys{} resolves this with a two-phase filesystem bootstrap that
replaces the network reseed with a local one. Phase~A, at deploy time, writes
each router's identity and publishes its RouterInfo into every peer's netDb
directory, so each router starts past the minimum-known-peers threshold and never
seeks an external reseed. Phase~B, at runtime, runs a timer that collects the
fresh RouterInfo files routers generate and redistributes them across the fleet,
which I2P's directory watcher ingests live without a restart, keeping a current
shared peer view from the local filesystem alone. The network reaches full
emergence with no external contact. Algorithm~\ref{alg:deploy} gives the full
sequence.

\begin{algorithm}[t]
\caption{Isolated \sys{} deployment}
\label{alg:deploy}
\small
\begin{algorithmic}[1]
\Require Topology $T=(R,F,G)$: routers $R$, floodfills $F\subseteq R$,
  region map $G$
\State create bridges and subnets for all blocks in $T$
\For{each router $r_i \in R$}
  \State create namespace \code{i2pns-r$i$}, veth pair, assign IP
  \State generate identity keypair; write \code{router.config}
  \State set floodfill flag if $r_i\in F$; set bandwidth from $G(r_i)$
\EndFor
\For{each router $r_i \in R$}\Comment{Phase A}
  \State copy $\{\text{RouterInfo}_j: j\neq i\}$ into
    \code{r$i$/data/netDb}
\EndFor
\State start all routers; arm cross-pollination timer (Phase B)
\State \textbf{wait} $\approx 25$\,min, undisturbed
\end{algorithmic}
\end{algorithm}

\subsection{Containment and Isolation Guarantee}
Isolation is structural, enforced by three independent mechanisms. The local
bootstrap removes the only reason a healthy router would contact an outside host;
each router's namespace has no default route to a physical interface, only the
internal bridge, so no path off the machine exists; and all addressing comes from
the topology on virtual interfaces, so packets are copied through host memory and
never reach hardware. The virtualization layers add separation, but the guarantee
holds at the namespace layer without them. We confirmed this empirically: with
the full network building tunnels, a capture on the guest's physical adapter
showed no I2P traffic to any external address, while a capture on the internal
bridge showed the expected NTCP2 and SSU2 exchange. No public reseed host,
floodfill, or off-topology address appeared, and the network sustained hours of
measurement without a single packet crossing outward. \sys{} is therefore a true,
high-fidelity I2P, real software producing real behaviour, entirely within a box.

\sys{} is run by a PyQt5 application (Fig.~\ref{fig:fleet}). The Fleet screen provides router cards that show peer and tunnel counts, reachability, uptime and console access. In the Topology Builder (Fig.~\ref{fig:topobuild}), locations, subnets, and router and floodfill counts are validated to create the sole topology JSON that drives deployment. The Topology screen (Fig.~\ref{fig:topoview}) provides the live graph visualization with the floodfills shown as well. The Network Map screen (Fig.~\ref{fig:netmap}) maps the routers geographically with live overlays and floodfill backbone. The Scenarios screen (Fig.~\ref{fig:scenarios}) runs churn, attack and resilience templates, and the Measurements screen starts and exports fleet-wide measurements. The header polls aggregated counters every eight seconds. As all the state is read from the router directory trees, the measurements are based on the router state, not inference, providing a ground truth.

\begin{figure}
\centering
\includegraphics[width=0.48\textwidth]{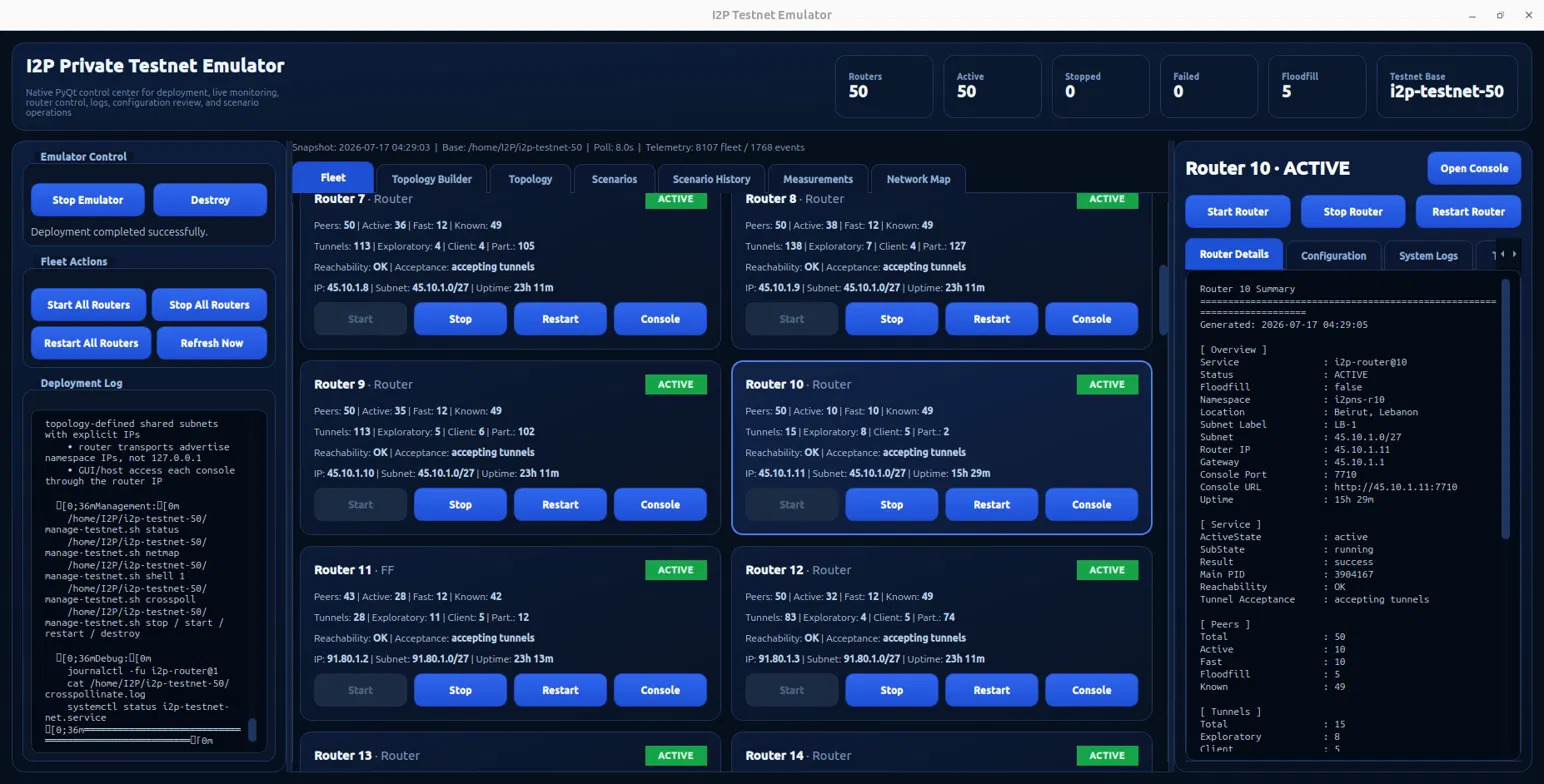}
\caption{\sys{} fleet dashboard. All 50 routers are ACTIVE. Router~10
  (the victim, highlighted) is selected; its detail pane confirms
  Floodfill:~false, Location:~Beirut,~Lebanon, IP:~45.10.1.11, and
  49 known peers.}
\label{fig:fleet}
\end{figure}

\begin{figure}
\centering
\includegraphics[width=0.95\linewidth]{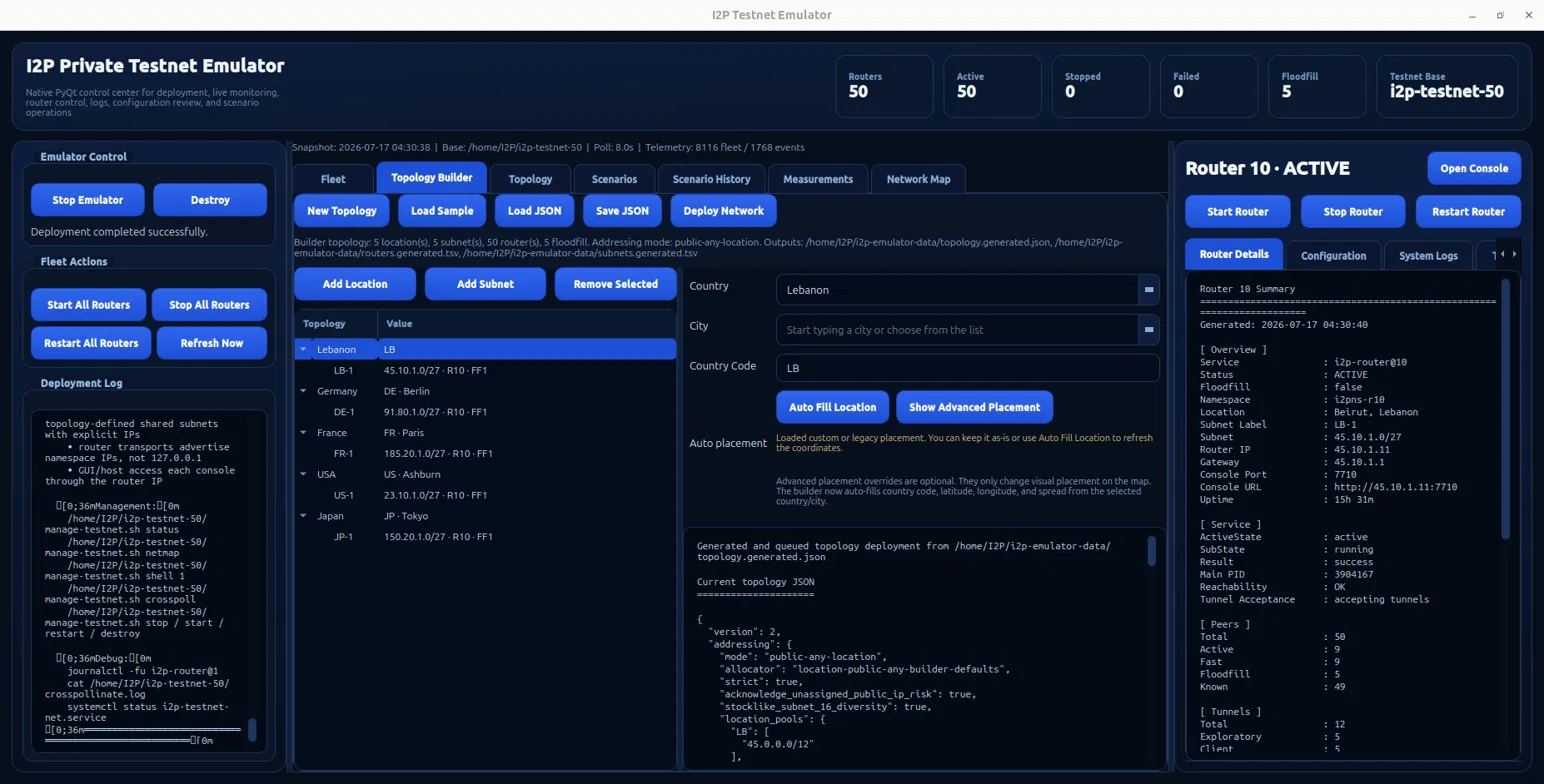}
\caption{Topology Builder. The table specifies five locations, five
  subnets, 50 routers, and five floodfills. The generated topology
  JSON (shown at bottom) is the single input to deployment.}
\label{fig:topobuild}
\end{figure}

\begin{figure}
\centering
\includegraphics[width=0.95\linewidth]{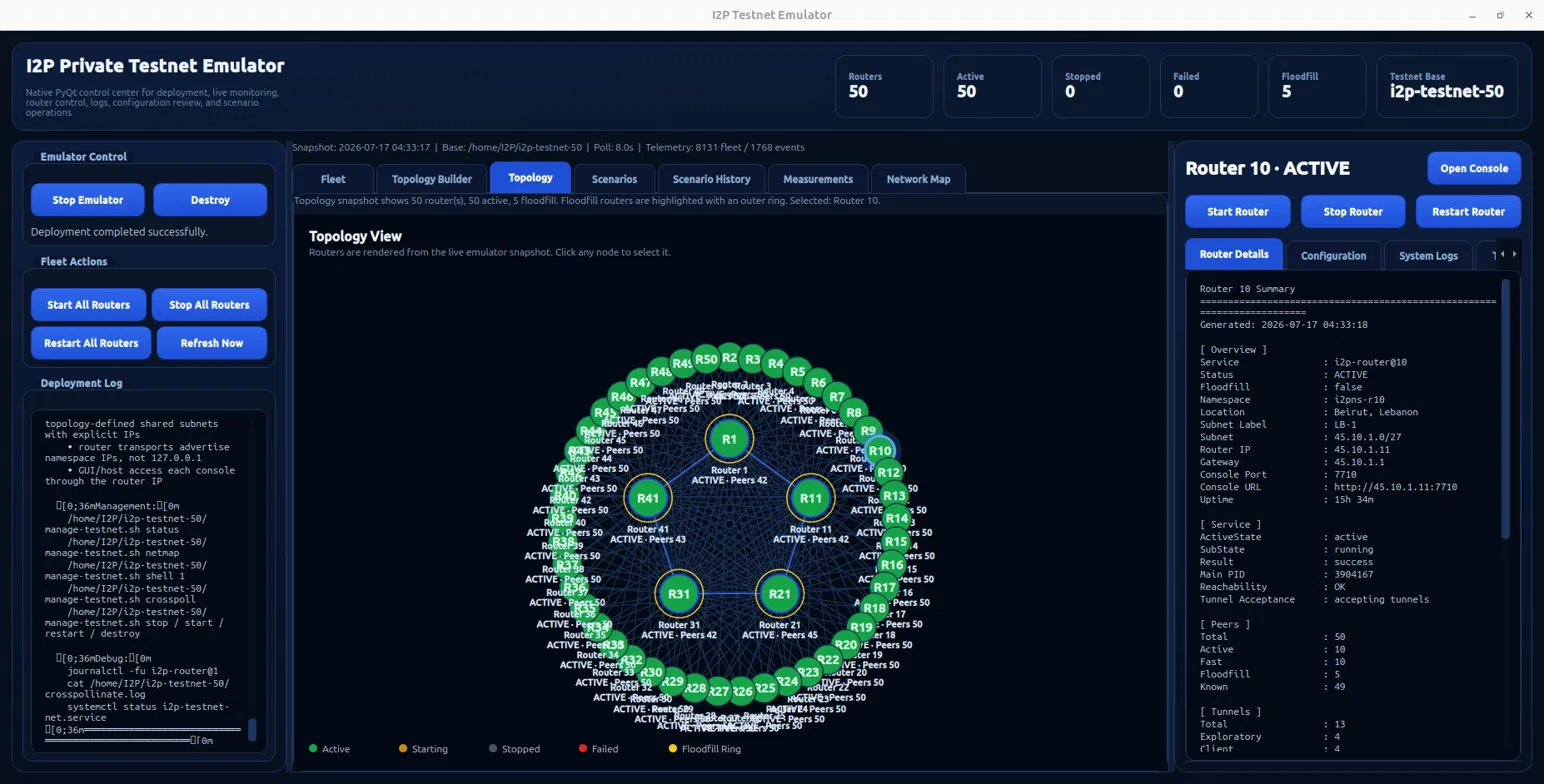}
\caption{Live topology graph of the fifty-router network after full
  emergence. All 50 nodes are active (green). The five floodfill routers
  ($r_1, r_{11}, r_{21}, r_{31}, r_{41}$) appear with a bright outer ring.
  Every node shows live peer counts.}
\label{fig:topoview}
\end{figure}

\begin{figure}
\centering
\includegraphics[width=0.95\linewidth]{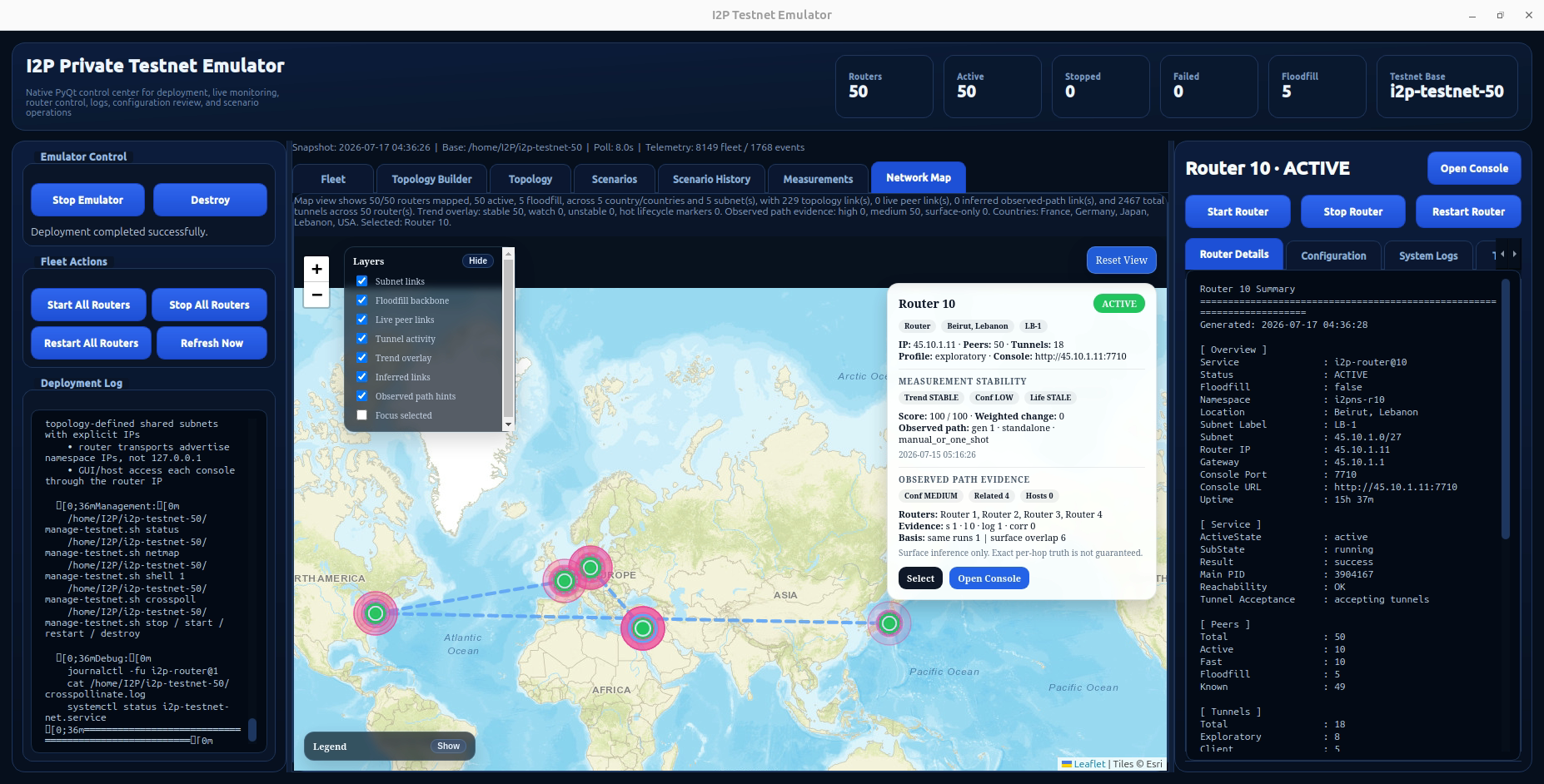}
\caption{Network Map view placing all 50 routers on a world map across
  five geographic locations (France, Germany, Japan, Lebanon, USA). The
  floodfill backbone (dashed lines) connects the five floodfill routers.
  Router~10, selected here, shows live measurement-stability data and
  observed-path evidence in its popup panel.}
\label{fig:netmap}
\end{figure}

\begin{figure}
\centering
\includegraphics[width=0.95\linewidth]{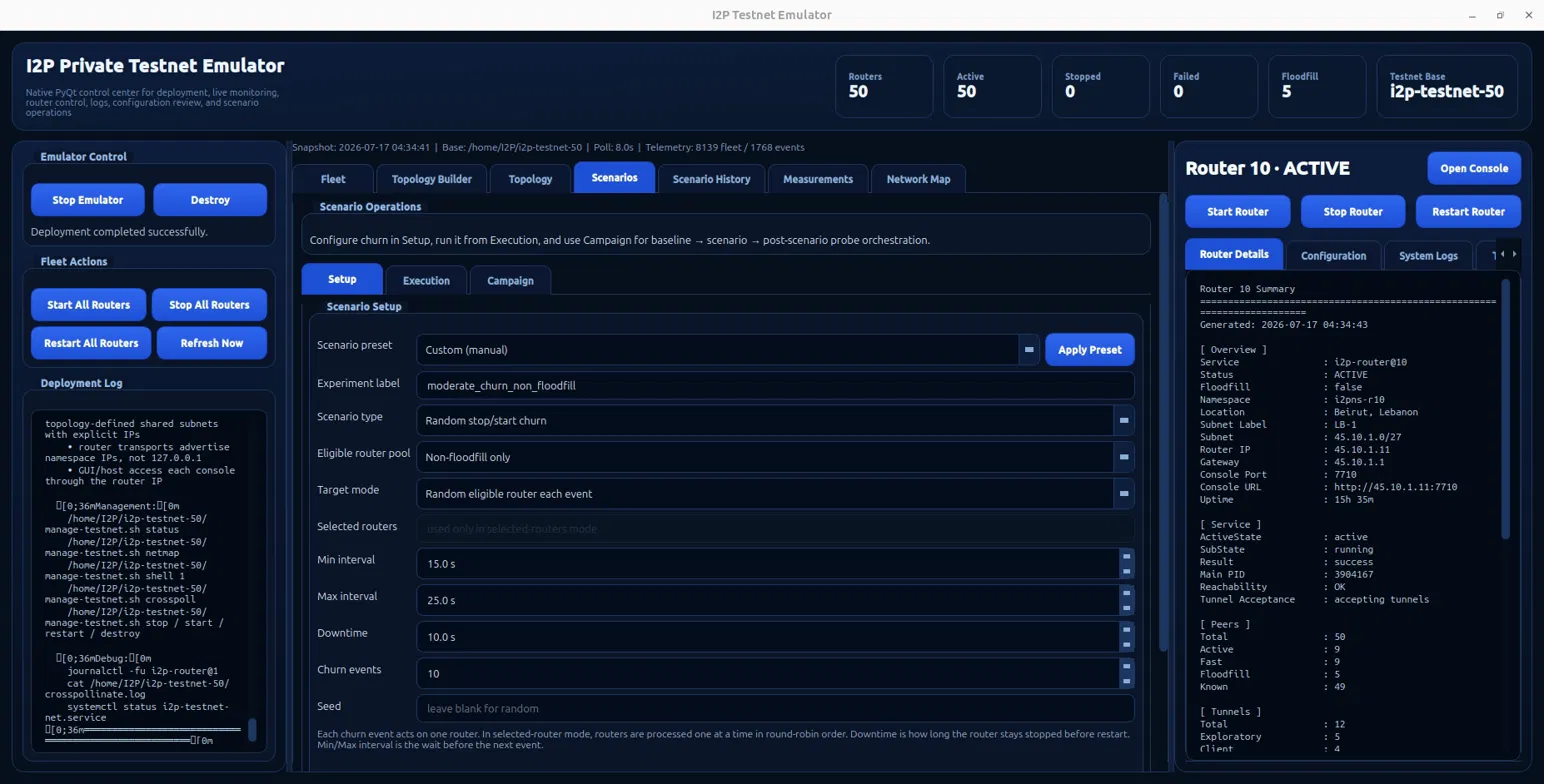}
\caption{Scenario configuration panel for a churn experiment: random
  stop/start events on non-floodfill routers, with adjustable intervals
  (15--25\,s), downtime (10\,s), and event count (10). The Setup,
  Execution, and Campaign subtabs structure the full experiment lifecycle.}
\label{fig:scenarios}
\end{figure}

\section{Deployment}
\label{sec:deploy}
The study deploys fifty routers across five locations; France, Germany, Japan, Lebanon, and the United States. It uses ten routers per location, each location occupying its own /27 subnet with one floodfill. A generated GeoIP file binds each subnet to a country code, and every router carries a matching bandwidth profile, so I2P treats the network as geographically diverse and applies its hop-per-address-block cap~\cite{muntaka2026geographic} to every tunnel built. This layout is a study configuration rather than a platform limit: the Topology Builder (Fig.~\ref{fig:topobuild}) generates arbitrary locations, subnets, and counts with a matching GeoIP mapping. Fig.~\ref{fig:topo} shows the result: five subnet clusters joined by bidirectional namespace
interconnects, each cluster's floodfill replicating the network's netDb distribution. The victim $r_{10}$ and attackers $r_{45}$--$r_{49}$ sit in separate clusters, so every tunnel the attack touches crosses subnet boundaries and exercises the address-diversity constraint.

\begin{figure}[H]
\centering
\begin{tikzpicture}[font=\scriptsize,
  cl/.style={draw=gibNavy,rounded corners=4pt,minimum width=2.3cm,
             minimum height=2.0cm,fill=gibGray},
  cltall/.style={draw=gibNavy,rounded corners=4pt,minimum width=2.3cm,
             minimum height=2.4cm,fill=gibGray},
  ff/.style={circle,draw=gibNavy,fill=gibGreen!60,minimum size=0.52cm,
             font=\tiny\bfseries,inner sep=0pt},
  rr/.style={circle,draw=gibNavy,fill=gibBlue!30,minimum size=0.40cm,
             font=\tiny,inner sep=0pt},
  att/.style={circle,draw=gibNavy,fill=gibRed!80,minimum size=0.42cm,
              font=\tiny,text=white,inner sep=0pt},
  vic/.style={circle,draw=gibRed,line width=1pt,fill=gibAmber!35,
              minimum size=0.48cm,font=\tiny\bfseries,inner sep=0pt}]
\node[cl,label={[font=\tiny]above:45.10/27 (LB)}] (A) at(-2.9,1.2){};
\node[ff]  at($(A)+(-0.50, 0.48)$){$r_1$};
\node[vic] at($(A)+(0.10,-0.15)$){$r_{10}$};
\node[rr]  at($(A)+(0.55, 0.48)$){$r_9$};
\node[cl,label={[font=\tiny]above:91.80/27 (DE)}] (B) at(0,1.2){};
\node[ff]  at($(B)+(0.00, 0.58)$){$r_{11}$};
\node[rr]  at($(B)+(-0.46,-0.15)$){$r_{12}$};
\node[rr]  at($(B)+(0.46,-0.15)$){$r_{13}$};
\node[cl,label={[font=\tiny]above:185.20/27 (FR)}] (C) at(2.9,1.2){};
\node[ff]  at($(C)+(0.00, 0.58)$){$r_{21}$};
\node[rr]  at($(C)+(-0.46,-0.15)$){$r_{22}$};
\node[rr]  at($(C)+(0.46,-0.15)$){$r_{23}$};
\node[cl,label={[font=\tiny]below:23.10/27 (JP)}] (D) at(-1.45,-1.3){};
\node[ff]  at($(D)+(0.00, 0.58)$){$r_{31}$};
\node[rr]  at($(D)+(-0.46,-0.15)$){$r_{32}$};
\node[rr]  at($(D)+(0.46,-0.15)$){$r_{33}$};
\node[cltall,label={[font=\tiny]below:150.20/27 (US)}] (E) at(1.45,-1.3){};
\node[ff]  at($(E)+(0.00, 0.72)$){$r_{41}$};
\node[att] at($(E)+(-0.58, 0.06)$){$r_{45}$};
\node[att] at($(E)+(0.00, 0.06)$){$r_{46}$};
\node[att] at($(E)+(0.58, 0.06)$){$r_{47}$};
\node[att] at($(E)+(-0.29,-0.52)$){$r_{48}$};
\node[att] at($(E)+(0.29,-0.52)$){$r_{49}$};
\foreach \x/\y in {A/B,A/D,B/C,B/D,B/E,C/E,D/E}
  \draw[gibBlue,semithick,{Latex[length=1.4mm]}-{Latex[length=1.4mm]}]
    (\x)--(\y);
\node[ff] (l1) at(-2.85,-3.05){};\node[right=2pt of l1,font=\tiny]{floodfill};
\node[vic](l2) at(-1.20,-3.05){};\node[right=2pt of l2,font=\tiny]{victim};
\node[att](l3) at(0.20,-3.05){};  \node[right=2pt of l3,font=\tiny]{attacker};
\node[rr] (l4) at(1.55,-3.05){}; \node[right=2pt of l4,font=\tiny]{regular};
\end{tikzpicture}
\caption{Logical topology of the fifty-router deployment. Five subnet
clusters span five geographic locations. The five attackers
($r_{45}$--$r_{49}$) share the US cluster; the victim ($r_{10}$) sits in a
separate cluster.}
\label{fig:topo}
\end{figure}

\section{Testing}
\label{sec:testing}

\subsection{A Two-Stage Model of Peer Selection}
\label{sec:model}
Tunnel participation unfolds in two stages. In \emph{selection}, the victim
decides which peers are eligible to appear in its tunnels; in \emph{delivery},
an eligible peer must accept the build request and actually relay traffic.
Keeping the stages apart is what lets the results be read precisely, since the
attack we study succeeds at selection and then exposes what delivery
additionally demands.

\begin{definition}[Peer profile]
Let $v$ be a victim router and $\mathcal{P}_v$ its known peers. For each
$p\in\mathcal{P}_v$, the profile is a tuple
$\pi_v(p)=(\hat{s},\hat{c},\beta^s,\beta^c,\mathbf{h})$, where
$\hat{s},\hat{c}$ are measured speed and capacity, $\beta^s,\beta^c$ are
stored bonuses, and $\mathbf{h}$ is the tunnel history.
\end{definition}

The ProfileOrganizer adds each stored bonus to the measured estimate,
\begin{align}
s_v(p) &= \hat{s}_v(p) + \beta^s_v(p),\label{eq:s}\\
c_v(p) &= \hat{c}_v(p) + \beta^c_v(p),\label{eq:c}
\end{align}
and admits to the preferred set every peer that clears both thresholds,
\begin{equation}
\Fv = \bigl\{\,p\in\mathcal{P}_v \;\bigm|\;
  s_v(p)\ge\tau^s_v \;\wedge\; c_v(p)\ge\tau^c_v\,\bigr\}.\label{eq:fv}
\end{equation}
The thresholds $\tau^s_v,\tau^c_v$ follow high percentiles of the score
distributions the victim observes.

\begin{definition}[Selection probability]
For a tunnel of hop length $\ell$, a peer $p\in\Fv$ appears in any given
tunnel of $v$ with probability
\begin{equation}
\Pr[p\in T_v]\approx\frac{\ell}{|\Fv|},
\label{eq:psel}
\end{equation}
subject to address-diversity constraints, and zero if $p\notin\Fv$.
\end{definition}

Entering $\Fv$ is the adversary's goal at selection, and
Eqs.~\eqref{eq:s} and~\eqref{eq:c} expose two independent routes. The
\textit{behavioral lever} raises the measured terms $\hat{s},\hat{c}$ through
genuine performance: advertising high bandwidth, accepting builds quickly,
holding uptime. The \textit{injection lever} writes large bonuses
$\beta^s,\beta^c$ directly, skipping measurement altogether.

\begin{theorem}[Injection guarantees eligibility]
If $\beta^s_v(a)>\tau^s_v$ and $\beta^c_v(a)>\tau^c_v$, then
$a\in\Fv$ for any $\hat{s}_v(a),\hat{c}_v(a)\ge 0$.
\end{theorem}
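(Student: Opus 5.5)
The plan is a direct unfolding of the definitions. Theorem~1 is essentially a monotonicity statement about the additive score model of Eqs.~\eqref{eq:s} and~\eqref{eq:c}. The eligibility test in Eq.~\eqref{eq:fv} has only two conjuncts, so it is enough to show that the adversary $a$ clears each threshold separately. I will also use the implicit premise that $a\in\mathcal{P}_v$, i.e.\ the victim holds a profile for $a$. This must hold, because the stored bonuses $\beta^s_v(a),\beta^c_v(a)$ are components of $\pi_v(a)$ and are only defined for known peers. I will state this explicitly at the start, since $\Fv$ is defined as a subset of $\mathcal{P}_v$.

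\textbf{Step 1 (speed).} By Eq.~\eqref{eq:s}, $s_v(a)=\hat{s}_v(a)+\beta^s_v(a)$. With $\hat{s}_v(a)\ge 0$ this gives $s_v(a)\ge\beta^s_v(a)$. The hypothesis $\beta^s_v(a)>\tau^s_v$ then yields $s_v(a)>\tau^s_v$, and in particular $s_v(a)\ge\tau^s_v$.

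\textbf{Step 2 (capacity).} The argument is identical using Eq.~\eqref{eq:c}: $c_v(a)\ge\beta^c_v(a)>\tau^c_v$.

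\textbf{Step 3.} Both conjuncts of Eq.~\eqref{eq:fv} hold and $a\in\mathcal{P}_v$, so $a\in\Fv$. The bound uses only $\hat{s}_v(a),\hat{c}_v(a)\ge 0$ and does not depend on their actual values. This gives the ``for any'' quantifier: whatever the adversary's measured performance is, the conclusion holds.

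The only real obstacle is interpretive, not computational. The thresholds $\tau^s_v,\tau^c_v$ are described as high percentiles of the observed score distributions, so they could in principle depend on $a$'s own inflated score. I would handle this in two ways. First, I would read the hypothesis as being about the thresholds in effect when the ProfileOrganizer evaluates Eq.~\eqref{eq:fv}. Second, I would note that percentile thresholds are bounded above by the maximum observed score. Hence, if injection shifts $\tau$, the claim still holds as long as the hypothesis is stated relative to the resulting thresholds. I would add a one-line remark to this effect and not attempt a fixed-point argument.
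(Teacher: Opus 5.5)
Your proof is correct and follows the same route as the paper's proof, which substitutes Eqs.~\eqref{eq:s} and~\eqref{eq:c} into Eq.~\eqref{eq:fv}: adding the nonnegative measured terms to bonuses that already exceed the thresholds gives $s_v(a)>\tau^s_v$ and $c_v(a)>\tau^c_v$. Your explicit note that $a\in\mathcal{P}_v$, and your remark that the thresholds are the ones in effect when the ProfileOrganizer evaluates $\Fv$, are sensible clarifications that the paper leaves implicit.
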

\begin{proof}
Substituting into~\eqref{eq:fv}: $s_v(a)=\hat{s}_v(a)+\beta^s_v(a)>
\tau^s_v$ and $c_v(a)=\hat{c}_v(a)+\beta^c_v(a)>\tau^c_v$, hence
$a\in\Fv$.
\end{proof}

Eligibility is not participation. A peer in $\Fv$ must still accept the build
and carry traffic. Writing $\alpha_v(p)$ for the rate at which $p$ agrees to
join tunnels proposed by $v$, the expected volume through $p$ is
\begin{equation}
\mathbb{E}[\text{tunnels through }p]\propto
  \Pr[p\in T_v]\cdot\alpha_v(p).\label{eq:del}
\end{equation}
An adversary that maximises $\Pr[p\in T_v]$ while $\alpha_v(p)$ stays low
gains little, and Section~\ref{sec:eval} measures both factors to show exactly
that. For each profile metric $m$ we report the attacker-to-control ratio
\begin{equation}
\rho(m)=\frac{\overline{m}_A}{\overline{m}_C},\label{eq:rho}
\end{equation}
averaged over trials, with values above one favouring the adversary.

\subsection{Measurement Methodology}
\label{sec:method}
I2P writes the very quantities of Section~\ref{sec:model} to disk. For each
known peer the victim keeps a gzipped file,
\code{r$v$/data/peerProfiles/p$X$/profile-\textit{hash}.txt.gz}, holding the
fields in Table~\ref{tab:fields}, so reading these files exposes the selection
state faithfully and without modifying the router.

\begin{table}[t]
\centering\footnotesize
\caption{Profile fields used for measurement}
\label{tab:fields}
\begin{tabular}{lp{3.2cm}}
\toprule
\textbf{Field} & \textbf{Role in selection} \\
\midrule
\code{speedBonus}               & injected bonus $\beta^s$ (stage one)\\
\code{capacityBonus}            & injected bonus $\beta^c$ (stage one)\\
\code{tunnelPeakThroughput}     & measured speed $\hat{s}$ \\
\code{tunnels.lifetimeAgreedTo} & accepted builds (delivery proxy)\\
\code{tunnels.lifetimeRejected} & rejected builds (delivery proxy)\\
\code{tunnelCreateResponseTime} & response latency\\
\bottomrule
\end{tabular}
\end{table}

A live router holds profile updates in memory and flushes them only on graceful
shutdown. To take a consistent snapshot we stop the victim, which forces the
flush, copy its \code{peerProfiles} tree and \code{netDb}, and restart it; each
flush yields 46 to 49 profiles. Profiles are keyed by I2P identity hash rather
than emulator index, so we rebuild the mapping from the victim's own netDb:
each attacker advertises a known topology IP, and scanning RouterInfo files for
that IP yields the hash from the filename. We recompute this mapping from live
state on every trial.

This protocol was learned the hard way. An early pilot compared a fresh run
against a victim profile that still held an injection from an earlier trial, and
the $9999\to9999$ baseline that resulted was an artefact rather than a
measurement. Every trial now passes an anti-contamination gate that confirms
all attacker bonuses read zero before it proceeds. Algorithm~\ref{alg:trial}
gives the per-trial procedure in full.

\begin{algorithm}[H]
\caption{One reputation-attack trial}
\label{alg:trial}
\small
\begin{algorithmic}[1]
\Require victim $v$, attackers $A$, controls $C$
\State \textbf{clean:} revert attacker configs; zero
  $\beta^s_v(a),\beta^c_v(a)$ for all $a\in A$; settle 10\,min
\State \textbf{baseline:} stop $v$; snapshot profiles; restart $v$
\State \textbf{gate:} assert $\beta^s_v(a)=0\;\forall a\in A$,
  else abort\label{ln:gate}
\State \textbf{attack:} apply behavioral config to $A$; stop $v$;
  write $\beta^s,\beta^c\leftarrow 9999$; start $v$; settle 10\,min
\State \textbf{attack snapshot:} stop $v$; copy profiles; restart $v$
\State \textbf{analyze:} resolve hashes; compute
  $\overline{m}_A,\overline{m}_C,\rho(m)$ per Table~\ref{tab:fields}
\end{algorithmic}
\end{algorithm}

Each trial runs end to end without pause, because a pause lets the coalescing
timer overwrite the injected values; we observed this in early work and treat
it as the main reliability hazard.


\subsection{Evaluation}
\label{sec:eval}
The evaluation spans five independent trials. Router $r_{10}$
(IP \code{45.10.1.11}) is the victim throughout. Five regular non-floodfill
routers in a separate subnet cluster, $r_{45}$ through $r_{49}$, act as
attackers, and five matched regular routers, $r_{42}$ through $r_{44}$ with
$r_{32}$ and $r_{33}$, serve as controls. The anti-contamination gate
(Algorithm~\ref{alg:trial}, line~\ref{ln:gate}) passed in all five trials, so
every baseline began with attacker bonuses at zero.
Table~\ref{tab:res} reports the footprint from live systemd cgroup readings.
Memory is the binding resource. At 417\,MB per router the fleet settles near
21\,GB, leaving the 48\,GB guest ample headroom, enough in principle for nearly
ninety routers. We stopped at fifty because five geographic locations already
made the topology representative, not because of that ceiling.

\begin{table}[t]
\centering\footnotesize
\caption{Resource footprint of the fifty-router deployment}
\label{tab:res}
\begin{tabular}{lrr}
\toprule
\textbf{Resource} & \textbf{Per router} & \textbf{Fleet (50)} \\
\midrule
RAM (measured)      & $\approx 417$\,MB & $\approx 20.9$\,GB\\
Disk (testnet tree) & $\approx 9$\,MB   & $\approx 457$\,MB\\
Emergence time      & ---               & $\approx 25$\,min\\
Single-host ceiling & ---               & $\approx 90$ routers\\
\bottomrule
\end{tabular}
\end{table}

Across all five trials the injected values held exactly. Attacker
\code{speedBonus} and \code{capacityBonus} read precisely $9999$ in the
victim's profiles with no cross-run variance (Fig.~\ref{fig:inj}), while the
controls remained at zero. Table~\ref{tab:main} gives the full breakdown. This
is Theorem~1 in practice: once a bonus clears the threshold, membership in
$\mathcal{F}_v$ follows on the unmodified router every time.

\begin{figure}[H]
\centering
\includegraphics[width=0.80\linewidth]{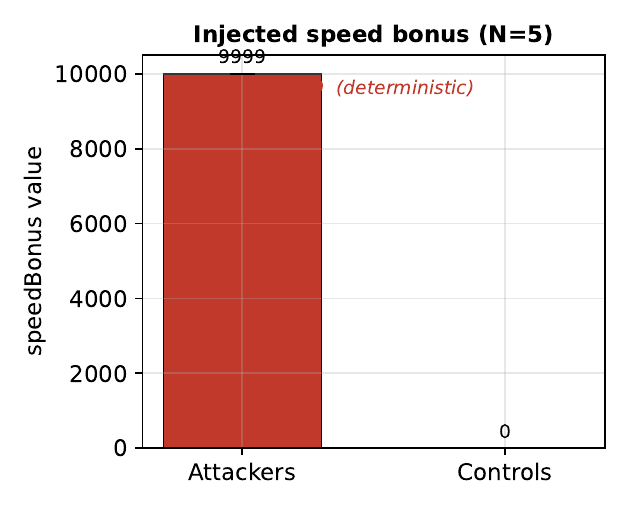}
\caption{Injected \code{speedBonus} in the victim's profiles, five
  trials. The effect is perfectly repeatable ($\sigma=0$); controls
  are untouched.}
\label{fig:inj}
\end{figure}

\begin{table}[t]
\centering\footnotesize
\caption{Mean $\pm$ std across five trials (attack condition).
  Baseline bonuses are zero by the gate.}
\label{tab:main}
\begin{tabular}{lrr}
\toprule
\textbf{Metric} & \textbf{Attackers} & \textbf{Controls} \\
\midrule
\code{speedBonus}           & $9999.0\pm0.0$ & $0.0\pm0.0$  \\
\code{capacityBonus}        & $9999.0\pm0.0$ & $0.0\pm0.0$  \\
\code{tunnelPeakThroughput} & $0.127\pm0.000$& $0.210\pm0.072$\\
\code{lifetimeAgreedTo}     & $10.76\pm0.20$ & $7.80\pm1.84$\\
\code{lifetimeRejected}     & $265.8\pm21.2$ & $77.0\pm16.5$\\
\code{responseMs}           & $21.71\pm4.82$ & $16.52\pm1.29$\\
\bottomrule
\end{tabular}
\end{table}

Eligibility carried through into selection outcomes. The attackers drew
$10.76\pm0.20$ lifetime build requests on average against the controls'
$7.80\pm1.84$, a ratio $\rho(\text{agreedTo})=1.38$ (Fig.~\ref{fig:sel}, left),
matching what Eq.~\eqref{eq:psel} predicts: the victim invites the adversary
into its tunnels at $1.38\times$ the rate of comparable peers.

\begin{figure}[t]
\centering
\includegraphics[width=0.99\linewidth]{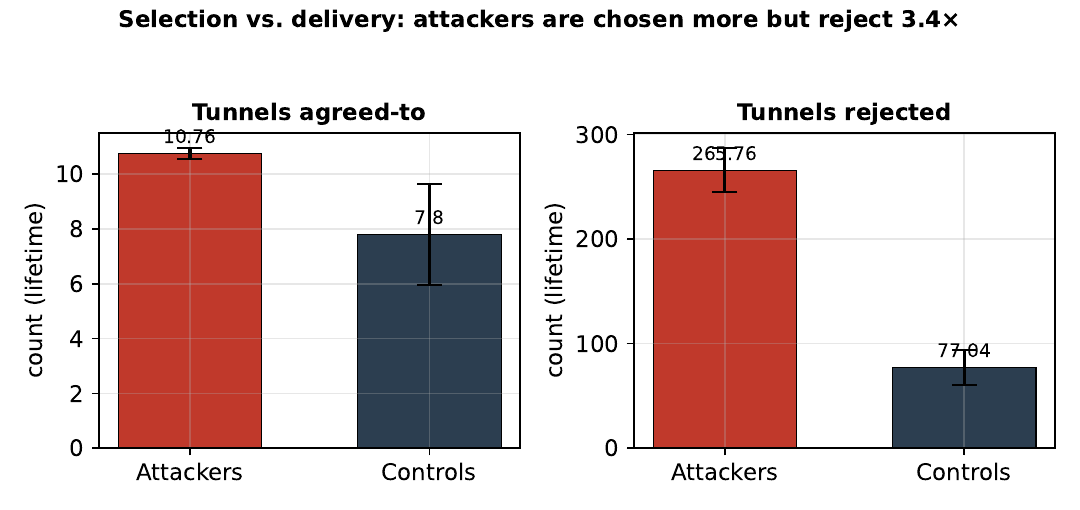}
\caption{Delivery stage. Attackers receive more build requests (left)
  but reject a larger fraction (right), illustrating the separation
  between the two stages.}
\label{fig:sel}
\end{figure}

Delivery behaved differently, and the two-stage separation shows here plainly.
Although selected more often, the attackers accepted far fewer requests: their
lifetime rejections reached $265.8\pm21.2$ against the controls' $77.0\pm16.5$,
giving $\rho(\text{rejected})=3.45$ (Fig.~\ref{fig:sel}, right).
Equation~\eqref{eq:del} explains why. The aggressive behavioral configuration
we applied, maximising advertised bandwidth and the participating-tunnel limit,
pushed the attackers into a regime where they declined most builds. Selection
succeeded completely; delivery demanded conditions the configuration did not
meet. Compromising the selection criterion is therefore a necessary step toward
traffic interception, not a sufficient one.

The behavioral lever proved weak under light load. Attacker throughput measured
$0.127\pm0.000$\,KBps, below the controls' $0.210\pm0.072$\,KBps, for
$\rho(\text{throughput})=0.60$ (Fig.~\ref{fig:thru}). Advertising a 16\,MB/s
bandwidth class did not lift the victim's measurement, because measured
throughput reflects traffic actually carried and testbed traffic is light: the
lever moved $\beta$ but not $\hat{s}$. Injection is thus the effective component
here, and whether the behavioral lever behaves differently on the loaded network is exactly what the emulator is suited to study.

\begin{figure}[t]
\centering
\includegraphics[width=0.80\linewidth]{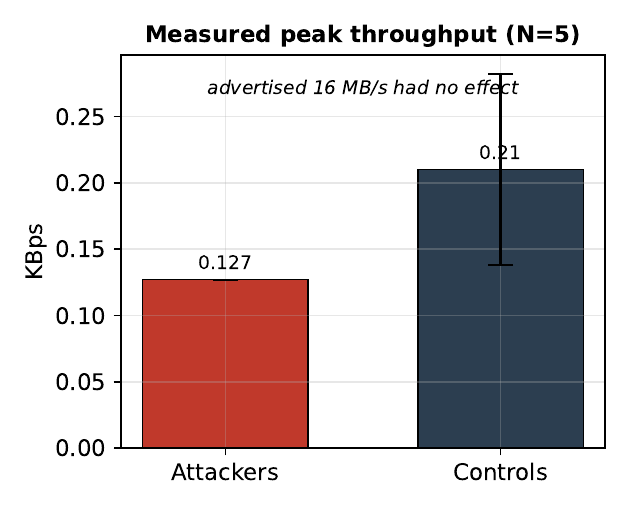}
\caption{Measured peak throughput, five trials. Advertising a high
  bandwidth class did not raise the attackers' measured throughput above
  the controls.}
\label{fig:thru}
\end{figure}

Fig.~\ref{fig:rat} collects $\rho(m)$ across all metrics. Selection moves
strongly for the adversary while delivery does not follow of its own accord,
and together the trials locate the structure of the attack with the precision
the platform was built to provide.

\begin{figure}[t]
\centering
\includegraphics[width=0.80\linewidth]{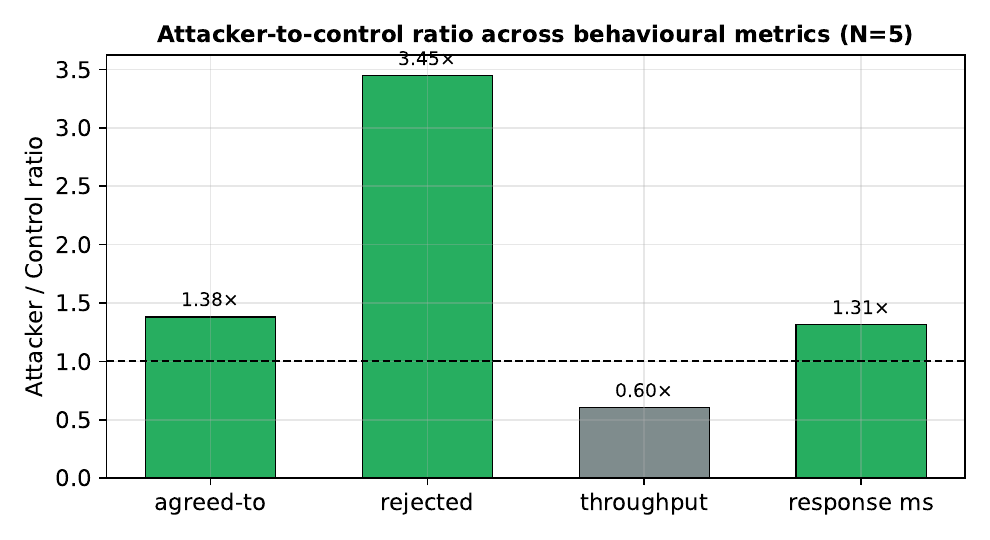}
\caption{Attacker-to-control ratios $\rho(m)$, five trials. Values
  above the dashed line ($\rho=1$) favour the adversary.}
\label{fig:rat}
\end{figure}

\section{Cybersecurity and Cyberwarfare Training}
\label{sec:training}

A CISA advisory issued in April 2026 characterised state-sponsored covert
relay ecosystems as a strategic national security priority~\cite{cisa2026}.
ORB variants used by groups such as APT28 and Volt Typhoon adapt
peer-selection and relay-management techniques drawn from conventional
anonymity networks including I2P~\cite{muntaka2025dib}. Training cyber
operators to recognise and disrupt these architectures therefore requires
hands-on exposure to the real protocol mechanics, not classroom
abstractions. \sys{} provides this substrate.

\subsection{Attack Scenario}

The Scenarios panel provides configurable attack and resilience templates, each specifying an attacker set, scenario type, timing parameters, and event count. Templates
can run from the interface or headlessly from a script. The current
library includes: \textbf{reputation injection} (studied in this paper);
\textbf{Sybil flooding}, which bursts attacker RouterInfos to study netDb
poisoning and floodfill recovery; \textbf{bandwidth inflation}, which
tests whether advertised capacity alone wins the high-capacity tier; and
\textbf{churn}, which stops or degrades routers to measure tunnel-rebuild
latency and anonymity-set reduction under attrition.

\subsection{Red-Team and Blue-Team Exercises}
The platform supports structured two-team exercises
(Fig.~\ref{fig:training}). A red team launches a scenario. A blue team
monitors fleet telemetry, inspects profiles and logs, and tries to locate
the anomaly. A debrief follows, in which the team proposes a defence and
retests it against the same scenario. Because a fresh testnet emerges in
under thirty minutes, the cycle runs multiple times in a single session.

\begin{figure}[t]
\centering
\begin{tikzpicture}[font=\scriptsize,node distance=5mm,
  st/.style={draw=gibNavy,rounded corners,fill=gibBlue!18,
             minimum width=2.6cm,minimum height=0.55cm,align=center},
  de/.style={draw=gibNavy,diamond,fill=gibGreen!22,aspect=2.2,
             minimum height=0.48cm,align=center,inner sep=1pt},
  ar/.style={-{Latex[length=1.5mm]},semithick,gibNavy}]
\node[st] (dep) {Deploy network};
\node[st,below=of dep] (red) {Red team launches};
\node[st,below=of red] (blu) {Blue team monitors};
\node[de,below=5mm of blu] (det) {detected?};
\node[st,below=5mm of det] (rep) {Debrief and defend};
\node[st,below=of rep] (pat) {Apply and rerun};
\node[st,right=12mm of det] (esc) {Escalate scenario};
\draw[ar](dep)--(red);
\draw[ar](red)--(blu);
\draw[ar](blu)--(det);
\draw[ar](det)--node[right,font=\tiny]{yes}(rep);
\draw[ar](det)--node[above,font=\tiny]{no}(esc);
\draw[ar](rep)--(pat);
\draw[ar](pat.west)--+(-8mm,0)|-(dep.west);
\draw[ar](esc.north)|-(red.east);
\end{tikzpicture}
\caption{Red-team and blue-team training cycle supported by \sys{}.
  A detected anomaly leads to debrief and defence; an undetected one
  escalates the scenario and repeats.}
\label{fig:training}
\end{figure}

\subsection{Curriculum and Operator Readiness}
The platform supports a short teaching module. One natural progression
begins with I2P architecture and netDb mechanics explored through the live router consoles, moves to peer selection and the reputation model
with a profile-inspection exercise, then has students run the provided scenarios and interpret the output for themselves, and finally asks them to design a countermeasure and test it. Because the Scenario
History view records every run with its parameters and outcomes, a session can be revisited during after-action review.

For operator readiness the value is more specific. The platform offers a safe, protocol-authentic setting in which the peer-selection mechanics that ORB variants inherit from I2P can be observed live rather than in
abstraction. Working from real I2P router consoles, an operator can watch a reputation-poisoned peer climb in the victim's profile, see its selection probability rise as a result, and compare how its traffic signature in the logs departs from that of a legitimate peer. To the best of our knowledge, no existing tool offers this capability on an unmodified I2P router.

\section{Lessons Learned}
\label{sec:lessons}
Building and operating \sys{} surfaced several practical lessons that shaped the final design, and that others working with isolated I2P testnets are likely to encounter.

\textbf{Bootstrap is the most critical design choice.} Strip out the filesystem
seed and every router drops straight into a reseed loop. Our early prototypes
leaned on the public reseed servers, and the result was always the same: a
network in which a few routers never managed to integrate. Pairing the Phase~A
seed with the Phase~B cross-pollination timer closes off that failure entirely.

\textbf{Emergence must be uninterrupted.} If the network is paused while routers
are still building their first tunnels, peer profiles and netDb entries are left half-expired. Routers stopped and restarted mid-emergence returned with markedly lower peer counts than those left undisturbed. The twenty-five-minute
window, then, is no safety margin; it is the floor for a healthy fleet.

\textbf{Profile flush timing is an experimental variable.} A running router keeps
all of its profile data in memory, so reading the on-disk profiles while it runs
simply returns nothing. Our early experiments assumed the on-disk copies were
live and were quietly rewarded with false negatives. Stopping the router gracefully and only then snapshotting is the one approach we found reliable.

\textbf{Clean-baseline discipline is non-negotiable.} One pilot compared an
attack snapshot against a baseline that still carried injected values from an
earlier run; the $9999\to9999$ difference came out to zero and the trial reported
no effect at all. The anti-contamination gate (Algorithm~\ref{alg:trial},
line~\ref{ln:gate}) exists precisely to catch this, and it has to run on every
trial.

\textbf{Live cgroup measurement corrects static estimates.} Our first per-router memory figure, 257\,MB, came from a partial reading and was simply wrong. The live systemd cgroup measurement put it at 417\,MB, 62\% higher. The lesson generalises: measure resource use on the running system, not from static
profiling tools or back-of-envelope estimates.

\section{Discussion and Future Work}
\label{sec:disc}
Splitting the attack into two stages is what keeps the result unambiguous. An adversary that reaches $\mathcal{F}_v$ has defeated the selection criterion
outright; whether that success converts into carried traffic is a separate question with its own conditions. The practical upshot is that a defence mounted at the selection layer confronts the prerequisite shared by a broad class of interception attempts, and does so at the more tractable of the two points. The model also points to a simple defence. An injected bonus sits orders of magnitude above anything legitimate observation would produce, so a router that
checks a stored bonus against the distribution of its own measurements for that peer could flag the mismatch and discount the value, leaving peers that earned their standing untouched. Quantifying this defence, including its false-positive
cost, is a natural next step.
Several other directions follow. Instrumenting a router to emit authoritative tunnel-hop events would measure stage-two delivery directly, rather than through
the proxies we rely on now. Spreading the deployment across multiple hosts over
an encrypted mesh would push past the single-machine ceiling of roughly 90~routers. A dedicated traffic generator would exercise the behavioral lever under realistic load, where its effect may look quite different. Finally, the scenario library invites expansion: netDb eclipse and floodfill corruption are
two obvious ORB-relevant mechanics still to be covered.

\section{Conclusion}
\label{sec:concl}
We presented \sys{}, an installable emulator that runs unmodified I2P
routers at useful scale, with a control interface, a scenario engine,
and a filesystem-based measurement pipeline. The platform eliminates the
reproducibility barrier that has limited I2P security research and
provides a protocol-authentic training environment for cyber-defence
exercises on anonymous overlay threats. A five-trial evaluation of a
reputation attack showed deterministic success at the selection stage
and precise characterisation of the delivery stage. We release the
platform and its measurement scripts so others can reproduce the study,
test the defensive signal we identify, and build on the scenario library
for their own research and training.

\section*{Acknowledgment}
The authors thank the MIRAGe-UC Laboratory and the UC-OCRI (\url{http://mirage-uc.org/}) for computing resources.

\balance
\bibliographystyle{IEEEtran}
\bibliography{references}

\section*{Artifact Availability}
The emulator and measurement pipeline are released as open-source software
(MIT License) at \url{https://github.com/abksiddique/i2p-emulator}.

\section*{LLM Usage Statement}
LLMs were used for editorial purposes in this manuscript, and all outputs
were inspected by the authors to ensure accuracy and originality.

\end{document}